\DeclareMathOperator{\e}{e}
\newcommand{\mbf}{\mathbb{F}}
\newcommand{\comment}[1]{}
\newcommand{\mcf}{\mathcal{F}}
\newtheorem{theorem}{Theorem}
\newtheorem{corollary}{Corollary}
\newtheorem{example}{Example}
\newtheorem{proposition}{Proposition}
\newtheorem{remark}[theorem]{Remark}
\newcommand{\bee}{\begin{equation}}
\newcommand{\eee}{\end{equation}}
\newcommand{\bea}{\begin{eqnarray}}
\newcommand{\eea}{\end{eqnarray}}
\newcommand{\bean}{\begin{eqnarray*}}
\newcommand{\eean}{\end{eqnarray*}}
\newcommand{\N}{\mathbb{N}}
\newcommand{\Z}{\mathbb{Z}}
\begin{document}

\title{Strict Local Martingales and the Khasminskii Test for Explosions}
\author{Aditi Dandapani%
\thanks{Applied Mathematics Department, Columbia University, New York, NY 10027; email: ad2259@caa.columbia.edu; Currently at Ecole Polytechnique, Palaiseau, France.}
\thanks{Supported in part by NSF grant DMS-1308483
} ~and Philip Protter%
\thanks{Statistics Department, Columbia University, New York, NY 10027; email: pep2117@columbia.edu.
}
\thanks{Supported in part by NSF grant DMS-1612758}
}
\date{\today }
\maketitle

\begin{abstract}
We exhibit sufficient conditions such that components of a multidimensional SDE giving rise to a local martingale $M$ are strict local martingales or martingales. We assume that the equations have diffusion coefficients of the form $\sigma(M_t,v_t),$ with $v_t$ being a stochastic volatility term. 

\end{abstract}

\subsection*{We dedicate this paper to the memory of Larry Shepp}

\section{Introduction}\label{intro}

In 2007 P.L. Lions and M. Musiela~(2007) gave a sufficient condition for when a unique weak solution of a one dimensional stochastic volatility stochastic differential equation is a strict local martingale.  Lions and Musiela also gave a sufficient condition for it to be a martingale. Similar results were obtained by L. Andersen and V. Piterbarg~(2007). The techniques depend at one key stage on Feller's test for explosions, and as such are uniquely applicable to the one dimensional case. Earlier, in 2002, F. Delbaen and H. Shirakawa~(2002) gave a seminal result giving necessary and sufficient conditions for a solution of a one dimensional stochastic differential equation (without stochastic volatility) to be a strict local martingale or not. This was later extended by A. Mijatovic and M. Urusov~(2012) and then by Cui et. al.~(2017) to the case where there exists a unique, weak solution, a well as stochastic volatility. 

Our idea is to use the theory of Lyapunov functions along the lines of Khasminskii (1980), Narita (1982), and Stroock-Varadhan (2006). This builds on the previous seminal work of J. Ruf (2015). We recently became aware of the excellent paper of D. Criens (2018) which uses techniques that are similar to those used in this paper. Criens is more interested in conditions for the absence of arbitrage, whereas we are more interested in the nature of martingales versus strict local martingales within a system of stochastic differential equations. We give a criterion, in terms of the coefficients of the equation, as to which components are martingales, and which are strict local martingales. Our criterion is deterministic. In addition, we consider a stochastic volatility case. 

In this paper we give conditions such that components of local martingale solutions to multidimensional stochastic differential equations are true martingales or not.  To incorporate the notion of stochastic volatility, we allow the diffusion coefficient of the local martingale in question to be a function of the multidimensional local martingale itself, as well as another stochastic process, namely, the stochastic volatility. In the models we study here, the concept of explosions plays a crucial role in our analysis, but due to the multidimensional nature of the setting we cannot any longer use Feller's test for explosions. This is because Feller's test for explosions can only tell us whether or not the solution to a one dimensional stochastic differential explodes, and we are dealing here with $n$ dimensional where $n>1$. The fact that the diffusion coefficient is a function of the vector local martingale itself, as well as of the stochastic volatility, brings us away from the realm of Feller's test for explosions, which is a solely one-dimensional phenomenon. 
Rather, we will rely on the use of the theory of multidimensional diffusions, as expounded in  Stroock-Varadhan (2006) and Narita~(1982). The theory of explosions of multidimensional diffusions involves Lyapunov functions and is a part of the study of stability theory, as established, for example, long ago by Khasminsikii~(2012). The link between explosions and the martingale property is crucial in the one-dimensional case, and we will see that it continues to be this way in the multidimensional case.

There is already a fairly keen interest in the topic of strict local martingales . Some relatively recent papers concerning the topic  include Biagini et al~(2014), Bilina-Protter~(2015), Chybiryakov~(2007), Delbaen-Schachermayer~(1998), X-M Li~(2017), Lions-Musiela~(2007), Hulley~(2010), Keller-Ressel~(2014), Madan-Yor~(2006), Mijatovic-Urusov~(2012), and Sin~(1998). An incentive for studying strict local martingales is their connection to the analysis of financial bubbles. Mathematical finance theory tells us that on a compact time set, the (nonnegative) price process of a risky asset is in a bubble if and only if the price process is a strict local martingale under the risk neutral measure governing the situation. 

There is only one other result that we know of that provides examples of multidimensional strict local martingales, and that is the very nice little paper of Xue-Mei Li~\cite{X-ML}. Her approach is quite different from ours. 

An outline of our paper is as follows. In the first section, we introduce multidimensional local martingales and the probabilistic setup. The link between explosion and the martingale property is established. In the next section, we lay out sufficient conditions for explosion or non-explosion of multidimensional processes.

We proceed after this to exhibit some examples in the two and three dimensional cases.

Our most important results are Corollary~\ref{col} and Theorem~\ref{main}.

\section{The Case of a Vector of Strict Local Martingales}\label{vect}

Suppose that we have a filtered probability space $(\Omega, \mathcal{F}_T, (\mathcal{F}_t)_{t\in [0,T]}, P).$ Suppose also that we have a $d$-dimensional vector of $(P, \mathcal{F})$ local martingales, $M$, as well as a stochastic volatility process, $v$. We will call the $d+1$ dimensional process $\begin{bmatrix} M\\ v \end{bmatrix}=X.$ We assume the $d$ dimensional local martingale $M$ and the process $v$ solve the following stochastic differential equations (we drop the vector notation for convenience):

 \begin{eqnarray} dM_t&=&M_t\sigma(M_t,v_t)dB_t; \quad M_0=1\label{d}\\
 dv_t&=&\bar{\sigma}(M_t,v_t)dZ_t +b(M_t,v_t)dt; \quad v_0=1\label{v} \end{eqnarray}   

In the above, we assume that $\sigma$ is a $d \times d$ diagonal matrix of diffusion coefficients that is locally bounded and measurable. $B$ is a $d$ dimensional correlated Brownian motion. We'll also assume that $\bar{\sigma}$ and $b$ are locally bounded and measurable.

 Assume also that $W:=\begin{bmatrix} B \\Z \end{bmatrix}$ is a $d+1$ dimensional correlated Brownian motion with correlation matrix $\Sigma.$ Lastly, our time interval is $[0,T].$

We would like to answer the following question: what are sufficient conditions such that components $1:j$ of $M$ are true martingales and components $j+1:d$ are strict local martingales? Before we answer this question, let us note that, for each $j,$ $M_j$ is a non-negative local martingale and hence a supermartingale by Fatou's Lemma, and it will be a true martingale on the time interval $[0,T]$ if and only if $E[M^{j}_T]=M^{j}_0=1.$ 

We begin with the following setup:

Let $\Omega^{1}$ denote the space of continuous functions $\omega^{1} : \omega^{1}:  (0,\infty) \rightarrow [0, \infty] $ such that $0$ and $\infty$ are absorbing boundaries and $\omega^{1}(0)=1.$ 
 
 For $2 \leq j \leq d$ let $\Omega^{j}$ denote the space of continuous functions $\omega^{j}= (0,\infty) \rightarrow [0, \infty)$.  Define, for all $i \in \mathbb{N},$ $2\leq j\leq d+1$ $R^{j}_i=\inf \{t \in [0,T]: \omega^{j}_t \textgreater i\}$ and $S^{j}_i= \inf \{t \in [0,T] : \omega^{j}(t) \leq \frac{1}{i}\},$ with the convention that $\inf\{\phi \}=T.$ Then, let $T^{j}_\infty:= \lim_{ i \to \infty} R^{j}_i$ and $T^{j}_0:= \lim_{i \to \infty} S^{j}_i.$ $T^{j}_0$ and $T^{j}_\infty$ denote the hitting times of $0$ and $\infty$ of $\omega^{j}$ respectively.

Now we can better define  $\Omega^{j}$ to be the space of continuous functions $\omega^{j}= (0,\infty) \rightarrow [0, \infty)$ such that $\omega^{j}(0)=1$ and $\omega^{j}_t=\omega^{j}_{(t \wedge T^{j}_0 \wedge T^{j}_\infty)}.$  
 
 For $d+1\leq j\leq2d+1$ let $\Omega^{j}$ denote the space of continuous functions $\omega^{j}:[0, \infty) \rightarrow (-\infty, \infty)$ with $\omega^{j}(0)=0.$

Recalling that we called the $d+1$ dimensional Brownian motion $W=\begin{bmatrix} B \\Z \end{bmatrix},$ define the canonical process by $\begin{bmatrix} v \\ M\\ W \end{bmatrix}= \begin{bmatrix} \omega^{1}(t) \\ \omega^{j}(t)_{2 \leq j\leq d} \\ \omega^{j}(t)_{d+1\leq j \leq 2d+2}) \end{bmatrix} $ for all $t \geq 0.$ Let the filtration $ \mbf =\{\mathcal{F}_t\}_{t\in [0,T]}$ be the right-continuous filtration generated by the canonical process.

  Let $\Omega= \prod_{j=1}^{2d+2}\Omega_j$ and $\omega={\omega_j}_{1\leq j\leq 2d+2}.$ Henceforth, processes will be defined on the filtered space $(\Omega,\mathcal{F}_T,\mathcal{F}_t)$, $t \in [0, T].$ Let $P$ be the probability measure induced by the canonical process on the space $(\Omega,\mathbb{F}).$

 Given the canonical space $(\Omega, \mathcal{F}_T, (\mathcal{F}_t)_{t \in [0,T]}),$ the processes $(v, M,W)$ correspond to the $2d+2$ components of $\omega.$
 We assume that the processes $v$ and $M$ are adapted to the filtration $\{\mathcal{F}_t\}_{t \in [0,T]}$ as well as $W,$ which is a Brownian motion with respect to this same filtration.
 
Define by $N^{j}_t$ the continuous $(P,\mathbb{F})$ local martingale $N^{j}_t=\int_{0}^{t} \sigma_{j,j}(M_s, v_s)dB^{j}_s.$ Note that more generally $N^j$ is a sigma martingale, but since all continuous sigma martingales are local martingales, we dispense with the notion of sigma martingales in this paper. We have that the process $M^{j}$ is given by: $M^{j}=\mathcal{E}(N^{j}).$ Here by $\mathcal{E}(N^{j})$ we mean the stochastic exponential of $N^j$. (See for example~\cite[p. 85]{PP}.)
  
  For the convenience of the reader we recall the definition of a \textit{standard system} as defined in~\cite{PAR} and its implications: Let $\mathcal{T}$ be a partially ordered, non-void indexing set and let $(\mathcal{F}^{0}_t)_{t \in \mathcal{T}}$ be an increasing family of $\sigma$ fields on $\Omega.$ We say that $(\mathcal{F}^{0}_t)_{t \in \mathcal{T}}$ is a \textit{standard system} if: 
  
\begin{enumerate}
 \item Each measurable space $(\Omega, \mathcal{F}^{0}_t)$ is a standard Borel space. In other words, $\mathcal{F}^{0}_t$ is $\sigma$ isomorphic to the $\sigma$ field of Borel sets on some complete separable metric space. 
\item For any increasing sequence $t_i \in \mathcal{T},$ and decreasing sequence $\mathcal{A}_i \in \Omega$ such that $\mathcal{A}_i$ is an atom of $\mathcal{F}_{t_i},$ we have $\cap_{i} \mathcal{A}_i \neq \phi.$

\end{enumerate}

Let us also state the implications of the property of being a standard system: Let $\mathcal{F}_i$ be a sequence of $\sigma$ fields on $\Omega$ satisfying $(1)$ and let $\mu_i$ be a consistent sequence of probability measures on ${\mathcal{F}_i}_{(i \geq 1)}.$ Then, from Parthasarathy~(1967), we have the following theorem:

 \begin{theorem}\label{par}[Parthasarathy] If condition $(2)$ holds, then ${\mu_i}_{(i \geq 1)}$ admits an extension to $\bigvee_{i \geq 1}\mathcal{F}_i.$

\end{theorem}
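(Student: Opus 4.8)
\emph{Proof proposal.} The statement is classical, and I would obtain it from the Carath\'eodory extension theorem, the point of hypotheses (1)--(2) being precisely to furnish the compactness that makes $\mu$ countably additive. Observe first that $\mathcal{A}:=\bigcup_{i\ge1}\mathcal{F}_i$ is an algebra (any two of its members lie in a common $\mathcal{F}_i$ by the nesting of the $\sigma$-fields) with $\sigma(\mathcal{A})=\bigvee_{i\ge1}\mathcal{F}_i$. Define $\mu$ on $\mathcal{A}$ by $\mu(A):=\mu_i(A)$ whenever $A\in\mathcal{F}_i$; consistency of $(\mu_i)$ makes this unambiguous and finite additivity is immediate. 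By Carath\'eodory it then suffices to prove that $\mu$ is continuous at the empty set along $\mathcal{A}$. So suppose $B_n\in\mathcal{A}$, $B_n\downarrow$, and $\inf_n\mu(B_n)=\varepsilon>0$; I must exhibit a point of $\bigcap_n B_n$ (equivalently, rule out $B_n\downarrow\emptyset$ with $\mu(B_n)\not\to0$).

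Relabeling and using monotonicity of $(B_n)$ with the nesting of the $\mathcal{F}_i$, I may take $B_n\in\mathcal{F}_n$. By hypothesis (1), $(\Omega,\mathcal{F}_n)$ is isomorphic as a measurable space to the Borel structure of a Polish — indeed, without loss of generality compact metric — space, and a Borel probability measure on such a space is inner regular with respect to compact sets. Pushing $\mu|_{\mathcal{F}_n}$ forward, approximating from inside, and pulling back, I obtain $\mathcal{F}_n$-measurable sets $C_n\subseteq B_n$ that are ``compact'' in the sense of this isomorphism with $\mu(B_n\setminus C_n)<\varepsilon\,2^{-n-1}$. Replacing $C_n$ by $\bigcap_{k\le n}C_k$, I may assume the $C_n$ decreasing; then $\mu(C_n)\ge\varepsilon/2>0$, so each $C_n\ne\emptyset$, and I choose $\omega_n\in C_n\subseteq B_n$, noting $\omega_n\in C_m$ for every $m\le n$.

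The heart of the matter is extracting a limit point of $(\omega_n)$ lying in every $C_m$. Fixing a countable generating algebra of each $\mathcal{F}_m$, a diagonal argument produces a subsequence $(\omega_{n_j})$ together with, for each $m$, an atom $A_m$ of $\mathcal{F}_m$ containing $\omega_{n_j}$ for all large $j$; since $\mathcal{F}_m\subseteq\mathcal{F}_{m+1}$, these atoms may be taken decreasing, $A_1\supseteq A_2\supseteq\cdots$. Hypothesis (2) then gives $\bigcap_m A_m\ne\emptyset$; fix $\omega^\ast$ in it. The compactness of $C_m$ is what forces $A_m\subseteq C_m$ — the $\mathcal{F}_m$-atom through a cluster point of points of a compact set cannot escape that set — whence $\omega^\ast\in\bigcap_m C_m\subseteq\bigcap_n B_n$, as required. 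This gives countable additivity of $\mu$ on $\mathcal{A}$, and Carath\'eodory's theorem then produces the extension to $\bigvee_{i\ge1}\mathcal{F}_i$.

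The delicate step, and the one I expect to carry the real work, is the last paragraph: arranging the inner approximants $C_m$, the points $\omega_n$, and the nested atoms $A_m$ so that the common point supplied by (2) genuinely lies in all the $B_n$, rather than merely in their closure. This interplay between the standard Borel structure of each $(\Omega,\mathcal{F}_n)$ and the atom condition (2) is exactly the content of the cited result, and the full argument is carried out in \cite{PAR}.
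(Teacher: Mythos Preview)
The paper does not prove this theorem; it is stated, attributed to Parthasarathy~\cite{PAR}, and then invoked as a black box inside the proof of Theorem~\ref{ext}. So there is no in-paper argument to compare your proposal against.

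Your sketch follows the classical route (and indeed the one in \cite{PAR}): reduce Carath\'eodory extension to continuity at $\emptyset$ along the algebra $\bigcup_i\mathcal{F}_i$, use the standard Borel structure of each $(\Omega,\mathcal{F}_n)$ to approximate from inside by ``compact'' sets $C_n$, and then use condition~(2) on a decreasing sequence of atoms to produce a point in $\bigcap_n C_n$. The overall strategy is correct, and you are right that the last step is where the work lies.

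One small correction to your account of that step: the inclusion $A_m\subseteq C_m$ does not need compactness --- it follows immediately from $C_m\in\mathcal{F}_m$ together with $A_m$ being an $\mathcal{F}_m$-atom that meets $C_m$ (any $\omega_{n_j}$ with $n_j\ge m$ large witnesses $A_m\cap C_m\ne\emptyset$). Where the standard Borel hypothesis genuinely enters is earlier, in ensuring the diagonal argument actually yields well-defined nonempty atoms $A_m$ containing a tail of the subsequence: a naive diagonalization over countable generating families only gives that $1_{G^m_k}(\omega_{n_j})$ stabilizes for each fixed $(m,k)$, not that some single $\omega_{n_j}$ lies in the resulting intersection. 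The clean fix is to pass through the Borel isomorphism with a compact metric space, extract a subsequence converging in each of the pulled-back pseudometrics, and take $A_m$ to be the atom through the limit --- which then sits inside $C_m$ by closedness. This is precisely the argument you defer to \cite{PAR}, so as a sketch your proposal is sound.
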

  
For any $\mbf$ stopping time $\tau,$ we define $\mathcal{F}_{\tau_{-}}$ as the smallest $\sigma$ algebra containing $\mathcal{F}^{0}_0$ and all sets of the form $A\cap\{t<\tau\}$ for $t\in [0,T]$, and $A\in\mcf_t$. See for example~Protter~(2005), p. 105.

We have the following theorem, from~Carr et al~(2014).

\begin{theorem}\label{ext}
Consider the probability space $(\Omega, \mathcal{F}_T, (\mathcal{F}_t)_{t\in [0,T]}, P)$ with the process $M^{j}$ defined as in \eqref{d}, with $M^{j}_0=1.$ Then there exists a unique probability measure, call it $P^{j},$ on $(\Omega, \mathcal{F}_{T^{j}_{\infty}})$ such that, for any stopping time $0 \textless \nu \textless \infty,$

 \begin{enumerate}
  \item \begin{equation}\label{1.3} P^{j}(A \cap\{T^{j}_\infty \textgreater \nu \wedge T\})=E_{P}[1_{A}M^{j}_{\nu \wedge T}] \end{equation} for all $A \in \mathcal{F}_{\nu \wedge T}.$

\item For all non-negative $\mathcal{F}_{\nu \wedge T}$ measurable random variables $U$ taking values in $[0, \infty],$  \begin{equation}\label{1.4} E_{P^{j}}[U1_{\{{T^{j}_\infty} \textgreater \nu \wedge T\}}]=E_{P}[UM^{j}_{\nu \wedge T}1_{\{T^{j}_0 \textgreater \nu \wedge T\}}] \end{equation}

and with $\tilde{M}^{j}_t=\frac{1}{M^{j}_t}1_{\{T^{j}_\infty \textgreater t\}},$

\item \begin{equation}\label{1.5} E_{P}[U1_{\{T^{j}_0 \textgreater \nu \wedge T\}}]=E_{P^{j}}[{U\tilde{M}^{j}}_{\nu \wedge T}]  \end{equation}

\item $M^{j}$ is a true martingale if and only if \begin{equation}\label{explo}P^{j}(T^{j}_{\infty} \textgreater T) =1 \end{equation}

\end{enumerate}

\end{theorem}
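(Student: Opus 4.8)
The plan is to recognize this as the classical change-of-measure (F\"ollmer / Delbaen--Schachermayer) construction attached to the nonnegative local martingale $M^{j}$, to build $P^{j}$ by localization together with a Kolmogorov-type extension, and then to read off \eqref{1.3}--\eqref{explo} from the construction. For $n\in\mathbb N$ let $R^{j}_{n}$ be the stopping times from the statement, so that $R^{j}_{n}\uparrow T^{j}_{\infty}$. Since $M^{j}=\mathcal E(N^{j})$ is continuous and strictly positive, the stopped process $M^{j}_{\cdot\wedge R^{j}_{n}}$ is bounded by $n$, hence a uniformly integrable $(P,\mathbb F)$-martingale with $E_{P}[M^{j}_{R^{j}_{n}\wedge T}]=M^{j}_{0}=1$; therefore $dP^{j}_{n}:=M^{j}_{R^{j}_{n}\wedge T}\,dP$ defines a probability measure on $\mathcal F_{R^{j}_{n}\wedge T}$. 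Applying the optional sampling theorem to this uniformly integrable martingale shows that the family $(P^{j}_{n})_{n}$ is consistent, i.e.\ $P^{j}_{n+1}$ restricted to $\mathcal F_{R^{j}_{n}\wedge T}$ equals $P^{j}_{n}$.

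To glue the $P^{j}_{n}$ into a single measure on $\bigvee_{n}\mathcal F_{R^{j}_{n}\wedge T}$, which equals $\mathcal F_{T^{j}_{\infty}}$, I would invoke Parthasarathy's extension theorem (Theorem~\ref{par}), which requires $(\mathcal F_{R^{j}_{n}\wedge T})_{n}$ to be a standard system. Condition~(1) holds because $\Omega$ is a finite product of Polish path spaces and each $(\Omega,\mathcal F_{R^{j}_{n}\wedge T})$ is (isomorphic to) the Borel $\sigma$-field of the Polish space of paths stopped at $R^{j}_{n}\wedge T$ --- this is exactly why the canonical space was set up from continuous functions on well-behaved state spaces. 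Condition~(2), that a decreasing sequence of atoms $\mathcal A_{n}\in\mathcal F_{R^{j}_{n}\wedge T}$ has nonempty intersection, is the crux: an atom of $\mathcal F_{R^{j}_{n}\wedge T}$ prescribes the canonical path up to the random time $R^{j}_{n}\wedge T$, a nested sequence of such prescriptions is automatically consistent, and by continuity --- using the convention that the path is absorbed once $\omega^{j}$ reaches $\infty$ --- some $\omega\in\Omega$ realizes all of them. Theorem~\ref{par} then produces $P^{j}$ on $\mathcal F_{T^{j}_{\infty}}$, and uniqueness is immediate since $P^{j}$ is forced on each $\mathcal F_{R^{j}_{n}\wedge T}$ and these generate $\mathcal F_{T^{j}_{\infty}}$ (one should run this with the uncompleted $\sigma$-fields that the standard-system framework demands).

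It then remains to extract the four assertions. For \eqref{1.3}, given a stopping time $\nu$ and $A\in\mathcal F_{\nu\wedge T}$, the set $A\cap\{\nu\wedge T<R^{j}_{n}\}$ lies in $\mathcal F_{R^{j}_{n}\wedge T}$, so $P^{j}(A\cap\{\nu\wedge T<R^{j}_{n}\})=E_{P}[1_{A\cap\{\nu\wedge T<R^{j}_{n}\}}M^{j}_{R^{j}_{n}\wedge T}]$; optional sampling for the bounded martingale $M^{j}_{\cdot\wedge R^{j}_{n}}$ replaces $M^{j}_{R^{j}_{n}\wedge T}$ by $M^{j}_{\nu\wedge T}$ on $\{\nu\wedge T<R^{j}_{n}\}$, and letting $n\uparrow\infty$ (so $\{\nu\wedge T<R^{j}_{n}\}\uparrow\{\nu\wedge T<T^{j}_{\infty}\}$, an event of full $P$-measure because $M^{j}$ does not explode under $P$) together with monotone convergence gives \eqref{1.3}. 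Identity \eqref{1.4} follows from \eqref{1.3} by passing from indicators to general nonnegative $\mathcal F_{\nu\wedge T}$-measurable $U$ via the monotone class theorem; the factor $1_{\{T^{j}_{0}>\nu\wedge T\}}$ is harmless since $M^{j}$, a stochastic exponential, is $P$-a.s.\ strictly positive, and it is what makes \eqref{1.5} the symmetric dual statement. For \eqref{1.5}, substitute $U\widetilde M^{j}_{\nu\wedge T}$ for $U$ in \eqref{1.4} and use that $P$-a.s.\ $M^{j}$ is strictly positive and non-exploding, so $\widetilde M^{j}_{\nu\wedge T}M^{j}_{\nu\wedge T}=1$ and $\widetilde M^{j}_{\nu\wedge T}1_{\{T^{j}_{\infty}>\nu\wedge T\}}=\widetilde M^{j}_{\nu\wedge T}$. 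Finally, taking $A=\Omega$ and $\nu=T$ in \eqref{1.3} yields $P^{j}(T^{j}_{\infty}>T)=E_{P}[M^{j}_{T}]$; since $M^{j}$ is a nonnegative supermartingale with $M^{j}_{0}=1$, it is a true martingale on $[0,T]$ if and only if $E_{P}[M^{j}_{T}]=1$, i.e.\ if and only if $P^{j}(T^{j}_{\infty}>T)=1$, which is \eqref{explo}.

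The step I expect to be the main obstacle is the verification of the atom condition~(2) for the standard system, needed to apply Theorem~\ref{par}: this is where the specific canonical path space --- continuous paths with $0$ and $\infty$ absorbing --- is used in an essential way, and where one must confirm that a nested family of path-germs really extends to a bona fide element of $\Omega$. Everything downstream of the extension is routine optional-sampling and monotone-class bookkeeping.
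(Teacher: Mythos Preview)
Your proposal is correct and follows essentially the same route as the paper: localize $M^{j}$ by the stopping times $R^{j}_{n}$, define the consistent family $P^{j}_{n}$ by density, apply Parthasarathy's extension theorem after checking the standard-system property, and then read off \eqref{1.3}--\eqref{1.5} by optional sampling plus monotone convergence/monotone class, with \eqref{1.5} obtained by the substitution $U\mapsto U\widetilde M^{j}_{\nu\wedge T}$. The only notable difference is that where you sketch a direct verification of the atom condition (2) for the standard system (and rightly flag it as the delicate point), the paper instead invokes Carr--Fisher--Ruf~(2014) together with Dellacherie, Meyer, and F\"ollmer to conclude that $\{\mathcal F_t\cap\mathcal F_{T^{j}_{\infty}}\}_{t\in[0,T]}$ is the right-continuous modification of a standard system on the canonical path space; also, the paper does not spell out the one-line derivation of \eqref{explo} from \eqref{1.3}, which you do.
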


\begin{proof}

Recall our assumption that $M^{j}_0=1.$ Observe that the stopped process $(M^{j})^{R^{j}_i}$ is a nonnegative martingale. 

Therefore, it generates a measure $P^{j}_i$ on $(\Omega, \mathcal{F}_{{T_\infty^{j}}_i})$ by $dP^{j}_i:=(M^{j})^{R^{j}_i}_TdP$ for all $i \in \mathbb{N}.$ Note that the family of probability measures $\{P^{j}_i\}$ is consistent for all $i,$ in that $\tilde{P}_{i+k}| \mathcal{F}_{{T_\infty^{j}}}=\tilde{P}_i$ $\forall i,k \in \mathbb{N}$ and $\mathcal{F}_{{T_\infty^{j}}}= \bigvee_{i \in \mathbb{N}} \mathcal{F}_{R^{j}_i}$

The extension theorem $V.4.1$ of~Parthasarathy~(1967) gives us the existence of a probability measure $P^{j}$ on $(\mathcal{F}_{{T_\infty^{j}}})$ such that $P^{j}|\mathcal{F}_{R^{j}_i}= P^{j}_i.$ Let us now check that the conditions of this theorem are indeed satisfied in our case. 

We need to check that $\{\mathcal{F}_{R^{j}_i}\}_{i \in \mathbb{N}}$ is a standard system. If this is true, we may apply the aforementioned extension theorem of Parthasarathy and also conclude that every probability measure on $\mathcal{F}_{T_\infty^{j}}$ has an extension to a probability measure on $\mathcal{F}_T.$

We have, from~Carr et al~(2014), that a sufficient condition for  $\{\mathcal{F}_{R^{j}_i}\}_{i \in \mathbb{N}}$ to be a standard system is the following: $\{\hat{\mathcal{F}}_t\}_{t \in [0,T]}:=\{\mathcal{F}_t \cap \mathcal{F}_{{T_\infty^{j}}}\}_{t \in [0,T]}$ is the right-continuous modification of a \textit{standard system}. In Carr et al~(2014), an example of an $\Omega$ and a filtration $\{\mathcal{F}_t\}_{t \in [0,T]}$  such that $\{\mathcal{F}_t \cap \mathcal{F}_{{T_\infty^{j}}}\}_{t \in [0,T]}$ is a right-continuous modification of a standard system is given:

 Let $E$ denote a locally compact space with a countable base (for example, $E= \mathbb{R}^{2d+2}$ for some $n \in \mathbb{N}$) and let $(\Omega)$ be the space of right-continuous paths $\omega : [0,T] \rightarrow [0, \infty] \times E$ whose $j^{th}$ component $\omega^{j}$ of $\omega$ is such that $\omega^{j}(T_\infty^{j}(\omega)+t)= \infty$ for all $t \geq 0$ and that have left limits on $(0, T_\infty^{j}(\omega))$ where $T_\infty^{j}(\omega)$ denotes the first time that $\omega^{j}= \infty.$ Let $\{\mathcal{F}^{0}_t\}_{t \in [0,T]}$ denote the filtration generated by the paths and $\{\mathcal{F}_t\}_{t \in [0,T]}$ its right-continuous modification. Then, it follows from works of any of Dellacherie~(1972), Meyer~(1972) and F\"ollmer~(1972; Example 6.3.2), that $\{\mathcal{F}_t \cap \mathcal{F}_{{T_\infty^{j}}}\}_{t \in [0,T]}$ is a right-continuous modification of a standard system.

In the example we are studying, we can equate the process $M^{j}_t$ with $\omega^{j},$ the $j^{th}$ component of $\omega.$ Thus, we have that, in our case, $\{\mathcal{F}_{{T^{j}_\infty}}\}_{i \in \mathbb{N}}$ is a standard system. 
 
Analogous to the argument used in Section 2 of~Carr et al~(2014), we also have that any probability measure $Q$ on $(\Omega, \mathcal{F}_{R^{j}})$ can be extended to a probability measure $\tilde{Q}$ on $(\Omega, \mathcal{F}_T).$

Note that we have, for all $A \in \mathcal{F}_{\nu \wedge T},$ and stopping times $\nu,$ 

\begin{multline*} P^{j}(A \cap \{R^{j} \textgreater \nu \wedge T\})=\lim_{i \to \infty}P^{j}(A\cap\{R^{j}_i \textgreater \nu \wedge T\})= \lim_{i \to \infty}P^{j}_i(A\cap\{R^{j}_i \textgreater \nu \wedge T\})= \\ \lim_{i \to \infty}E_{P}[1_{A}\cap\{R^{j}_i \textgreater \nu \wedge T\}(M^{j})^{R^{J}_i}]=  \lim_{i \to \infty}E_{P}[1_{A}\cap\{R^{j}_i \textgreater \nu \wedge T\} M^{j}_{\nu\wedge T}]=E_{P}[1_{A}{M^{j}}_{\nu \wedge T}] \end{multline*}

From this, we get~\eqref{1.3}. Taking $\nu=R^{j}_i$ and $A=\Omega$ we get $P^{j}(R^{j} \textgreater R^{j}_i \wedge T)=1$ $\forall i \in \mathbb{N}.$ \\

 Thus, we have that $P^{j}(A)=P^{j}(A \cap \{R \textgreater R^{j}_i \wedge T\})=E_P[(M^{j})^{R^{J}_i}_T1_{A}]$ for $A \in \mathcal{F}_{R^{j}_i}.$ Since $F_{{R^j}}=\bigvee_{i \in \mathbb{N}} \mathcal{F}_{R^{J}_i}$ and $\bigcup_{i \in \mathbb{N}}\mathcal{F}_{R_i}$ is a $\pi$ system, and by a standard application of the Monotone Class Theorem (cf, eg,~\cite[p. 7]{PP}) we have uniqueness of $P^{j}$ on $(\Omega, \mathcal{F}_{R^{j}}).$ 

We have that~\eqref{1.4} follows from~\eqref{1.3} from the monotone convergence theorem and~\eqref{1.5} follows from~\eqref{1.4} from $P^{j}(R^{j} \textgreater T)=1$ and applying~\eqref{1.3} to $U1_{\{R^{j} \textgreater \nu \wedge T\}}\tilde{M}^{j}_{\nu \wedge T}$ instead of $U$ for $U$ and $\nu$ as in Theorem~\ref{main}. 
 
 \end{proof}

We have the following important corollary to Theorem \ref{ext}: 

\bigskip

\begin{corollary}\label{col}
Components $1:k$ of $M$ are true martingales and components $k+1:d$ of $M$ are strict local martingales if and only if, for $j=1:k,$ $P^{j}(T^{j}_{\infty} \textgreater T)=1$ and for $j=k+1:d,$ $P^{j}(T^{j}_{\infty} \leq T) \textgreater 0.$  That is, if local martingales $M^{1}:M^{k}$ do not explode before time $T$ under measures $P^{1}:P^{k}$ and local martingales $M^{k+1}:M^{d}$ do explode before time $T$ under measures $P^{k+1}:P^{d}.$
\end{corollary}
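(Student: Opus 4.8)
The plan is to derive this corollary directly from part (4) of Theorem~\ref{ext}, which already characterizes the martingale property of a single component $M^j$ in terms of the explosion behavior of the canonical process under $P^j$. Since the statement is an ``if and only if'' concerning $d$ separate components, the natural approach is to observe that the martingale/strict-local-martingale dichotomy decouples: each $M^j$ is a nonnegative local martingale on $[0,T]$ in its own right, so by the remark preceding the setup, $M^j$ is a true martingale on $[0,T]$ if and only if $E_P[M^j_T] = 1$, i.e. if and only if it fails to be a strict local martingale. Thus ``components $1{:}k$ are martingales and components $k+1{:}d$ are strict local martingales'' is literally the conjunction, over $j=1{:}k$, of ``$M^j$ is a martingale'' together with, over $j=k+1{:}d$, of ``$M^j$ is not a martingale.''

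The key steps, in order, would be: first, fix $j$ and apply Theorem~\ref{ext}(4) to $M^j$, yielding that $M^j$ is a true martingale on $[0,T]$ if and only if $P^j(T^j_\infty > T) = 1$. Second, negate this equivalence: $M^j$ is a strict local martingale on $[0,T]$ if and only if $P^j(T^j_\infty > T) < 1$, which — since $\{T^j_\infty > T\}$ and $\{T^j_\infty \le T\}$ partition $\Omega$ — is the same as $P^j(T^j_\infty \le T) > 0$. Third, take the conjunction over $j = 1{:}k$ of the first statement and over $j = k+1{:}d$ of the second, which gives exactly the claimed characterization. One should note that each $P^j$ is the measure constructed in Theorem~\ref{ext} associated to the $j$-th component, so the statement is simply a componentwise bookkeeping of the one-component result; no joint law of the $M^j$ under a single measure is needed, and no interaction between components enters.

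I do not expect a serious obstacle here, since the corollary is essentially a restatement of Theorem~\ref{ext}(4) applied coordinate by coordinate together with the elementary fact that a nonnegative local martingale on a finite horizon is either a martingale or a strict local martingale. The one point that warrants a sentence of care is the logical translation: ``$M^j$ strict local martingale'' must be read as ``$M^j$ is a local martingale but not a martingale on $[0,T]$,'' and one should confirm that $\{M^j \text{ is not a martingale}\}$ corresponds precisely to $\{P^j(T^j_\infty \le T) > 0\}$ via the contrapositive of part (4), using that $T^j_\infty$ takes values in $[0,T] \cup \{T\}$ under the convention $\inf\{\phi\} = T$ so that $\{T^j_\infty > T\}^c = \{T^j_\infty \le T\}$ exhausts the sample space. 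After that, the proof is a two-line deduction.
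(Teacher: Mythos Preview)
Your proposal is correct and matches the paper's approach: the paper states the corollary without proof, treating it as an immediate consequence of Theorem~\ref{ext}(4) applied to each component separately, which is precisely the componentwise bookkeeping you describe. Your added care about the contrapositive and the partition $\{T^j_\infty > T\}\cup\{T^j_\infty \le T\}=\Omega$ simply makes explicit what the paper leaves implicit.
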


\section{Explosions of Multidimensional Diffusions}\label{multi}

In this section, we will discuss and display some results found in~Stroock \& Varadhan (2006) that treat the subject of explosions of multidimensional diffusions. Unspecified citation in this section refer to that book. 

Let us assume that we have a probability space $(\Omega, \mathcal{F}_T, (\mathcal{F}_t)_{t\in [0,T]}, P).$

Let $X$ be an $\mathbb{R}^{d}$ valued multidimensional diffusion that solves

 \begin{equation} dX_t=\sigma(X_t)dB_t+b(X_t)dt; \quad X_0=x_0\label{d1}
 \end{equation}
Where in the above, $\sigma: \mathbb{R}^{d} \rightarrow \mathbb{R}^{d}$ is a $d\times d$ matrix of diffusion coefficients that is locally bounded, and $b: \mathbb{R}^{d} \rightarrow \mathbb{R}^{d}$ is a $d$ dimensional vector of locally bounded drift coefficients. $B$ is a $d$- dimensional $\mathbb{F}$ Brownian motion.

Let $\mathcal{L}$ be the extended generator of this diffusion $X$: \begin{equation} \mathcal{L}=\frac{d}{dt}+\frac{1}{2}\sum\limits_{i,j=1}^da_{i,j}(x)\frac{d^{2}}{dx_{i}dx_{j}}+\sum\limits_{i,j=1}^db^{i}(x)\frac{d}{dx_i} \end{equation}

In the above, $a=\sigma {\sigma}^{T}.$

We then have the following theorem regarding non-explosion:

\begin{theorem}\label{var1}[Theorem 10.2.1]\label{non_explosion}

Assume the existence of a non-negative function $V \in C^{1,2}([0,T] \times \mathbb{R}^{d})$ as well as the existence of a $\lambda \textgreater 0$ such that 
\begin{eqnarray}\label{lya} \lim_{|x| \to \infty} \inf_{0 \leq t \leq T}V(t,x)&=&\infty \\ 
\mathcal{L}V -\lambda V& \leq &0
\end{eqnarray} 

Then,  with probability $1,$ the process $X$ does not explode before $T.$ 
\end{theorem}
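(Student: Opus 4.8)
The plan is to prove Theorem~\ref{non_explosion} by the classical Lyapunov/supermartingale argument: apply It\^o's formula to the process $e^{-\lambda t}V(t,X_t)$ stopped at the exit times $R_i := \inf\{t : |X_t| \ge i\}$, show it is a nonnegative local martingale that is bounded below and has nonpositive drift (hence a supermartingale on each stochastic interval), and then take $i \to \infty$ to deduce that the explosion time $T_\infty = \lim_i R_i$ satisfies $P(T_\infty \le T) = 0$.

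Concretely, first I would fix $i \in \mathbb{N}$ and define the stopped process $Y^i_t := e^{-\lambda (t\wedge R_i)} V(t \wedge R_i, X_{t\wedge R_i})$. Since $V \in C^{1,2}$ and $X$ solves~\eqref{d1}, It\^o's formula gives
\begin{equation*}
dY^i_t = e^{-\lambda t}\big(\mathcal{L}V(t,X_t) - \lambda V(t,X_t)\big)\, \mathbf{1}_{\{t \le R_i\}}\,dt + e^{-\lambda t}\,\mathbf{1}_{\{t\le R_i\}}\,\nabla_x V(t,X_t)^{T}\sigma(X_t)\,dB_t .
\end{equation*}
On $[\![0, R_i]\!]$ the process $X$ stays in the compact ball $\{|x|\le i\}$, where $\sigma$ and $\nabla_x V$ are bounded, so the stochastic integral term is a genuine martingale; by the hypothesis $\mathcal{L}V - \lambda V \le 0$ the drift term is nonincreasing. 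Hence $Y^i$ is a nonnegative supermartingale with $\mathbb{E}[Y^i_t] \le Y^i_0 = V(0,x_0)$ for all $t \in [0,T]$.

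Next I would use the coercivity condition~\eqref{lya}. Set $m(i) := \inf_{0\le t\le T, |x| = i} V(t,x)$; condition~\eqref{lya} implies $m(i) \to \infty$ as $i \to \infty$. On the event $\{R_i \le T\}$ we have $|X_{R_i}| = i$ by continuity of paths, so $V(R_i, X_{R_i}) \ge m(i)$ and therefore $Y^i_T = e^{-\lambda R_i}V(R_i, X_{R_i}) \ge e^{-\lambda T} m(i)$ on that event. Combining with the supermartingale bound,
\begin{equation*}
V(0,x_0) \ge \mathbb{E}[Y^i_T] \ge \mathbb{E}\big[Y^i_T \mathbf{1}_{\{R_i \le T\}}\big] \ge e^{-\lambda T} m(i)\, P(R_i \le T).
\end{equation*}
Thus $P(R_i \le T) \le e^{\lambda T} V(0,x_0)/m(i) \to 0$. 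Since $\{T_\infty \le T\} \subseteq \{R_i \le T\}$ for every $i$ (the $R_i$ increase to $T_\infty$), we get $P(T_\infty \le T) = 0$, i.e.\ $X$ does not explode before $T$ almost surely.

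The main obstacle — really the only point requiring care — is the justification that $Y^i$ is a true supermartingale rather than merely a local one: one must check that on the stochastic interval $[\![0,R_i]\!]$ the integrand $e^{-\lambda t}\mathbf{1}_{\{t\le R_i\}}\nabla_x V(t,X_t)^{T}\sigma(X_t)$ is square-integrable in the $dt\otimes dP$ sense, which follows because $V\in C^{1,2}$ makes $\nabla_x V$ continuous hence bounded on the compact set $[0,T]\times\{|x|\le i\}$ and $\sigma$ is assumed locally bounded. A secondary technical point is measurability/attainment of the infimum defining $m(i)$, handled by continuity of $V$ on the compact sphere, and the fact that $|X_{R_i}| = i$ exactly on $\{R_i < \infty\}$, which uses path-continuity of $X$. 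Everything else is the standard Gr\"onwall-free Khasminskii argument, and no further input beyond It\^o's formula and optional stopping on $[\![0,R_i]\!]$ is needed.
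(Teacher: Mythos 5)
Your proof is correct and follows essentially the same route as the paper: apply It\^o's formula to $e^{-\lambda t}V(t,X_t)$ stopped at $R_i$, use $\mathcal{L}V - \lambda V \le 0$ to get the supermartingale inequality $V(0,x_0) \ge e^{-\lambda T}\, m(i)\, P(R_i \le T)$, and then let $i\to\infty$ via the coercivity condition. You add a careful justification of the true (not merely local) supermartingale property of the stopped process, which the paper leaves implicit, but the argument is the same.
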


\begin{proof}

Define the sequence of stopping times $\tau_n=\inf\{t: |X_t| \geq n\}$. Since we have that $\mathcal{L} -\lambda V \leq 0,$ on $[0,T]  \times \mathbb{R}^{d},$ we obtain that \begin{equation*} V(0,X_0) \geq E[ \e^{-\lambda(T \wedge \tau_n)}V(T\wedge \tau_n, X_{T\wedge \tau_n})] \geq \e^{-\lambda T}E_{\tilde{P}}[V(\tau_n, X_{\tau_n}) 1_{\tau_n \textless T}]  \end{equation*} 

Since it is true that $|X_{\tau_n}|=n$ if $\tau_n \textless T$ and that we have assumed that $\lim_{|x| \to \infty} \inf_{0 \leq t \leq T}V(t,x)=\infty,$ we must have that \begin{equation*} \lim_{n \to \infty}P(\tau_n \textless T) =0 \end{equation*} Since the vector process $X$ does not explode before $T$, the process $S$ does not explode before $T,$ and we have $P(T_{\infty} \textgreater T) =1.$

\end{proof}

We also have the following theorem regarding explosion: 

\bigskip

\begin{theorem}\label{var2}[Also Theorem 10.2.1]

Assume the existence of a number $\lambda \textgreater 0$ and a bounded function $V \in C^{1,2}([0,T] \times \mathbb{R}^{d})$ such that \begin{equation}\label{expl} V(0,x_{0}) \textgreater \e^{-\lambda T} \sup_{x \in \mathbb{R}^{d}}V(T,x) \end{equation} and

\begin{equation*}\mathcal{L}V \geq \lambda V \end{equation*}

Then, we have $\lim_{n\to \infty} \tilde{P}(\tau_n \leq T) \textgreater 0.$

\end{theorem}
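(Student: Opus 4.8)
The plan is to mimic the non-explosion argument of Theorem~\ref{var1} but run the supermartingale inequality in the reverse direction, using $\mathcal{L}V \ge \lambda V$ to produce a \emph{submartingale} after the usual exponential discounting. Concretely, set $Y_t := \e^{-\lambda t} V(t, X_t)$ and apply It\^o's formula: $dY_t = \e^{-\lambda t}\big(\mathcal{L}V(t,X_t) - \lambda V(t,X_t)\big)\,dt + \e^{-\lambda t}\nabla V(t,X_t)\cdot \sigma(X_t)\,dB_t$. Because $\mathcal{L}V - \lambda V \ge 0$, the finite-variation part is nondecreasing, so the \emph{stopped} process $Y_{t\wedge\tau_n}$ is a (true) submartingale on $[0,T]$ — the stopping at $\tau_n=\inf\{t:|X_t|\ge n\}$ is what makes the stochastic integral a genuine martingale rather than merely a local one, since $V$ and $\sigma$ are bounded on $\{|x|\le n\}$.

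From submartingality I get $V(0,x_0) = Y_0 \le E[Y_{T\wedge\tau_n}] = E[\e^{-\lambda(T\wedge\tau_n)}V(T\wedge\tau_n, X_{T\wedge\tau_n})]$. Now split the expectation according to whether $\tau_n < T$ or $\tau_n \ge T$. On $\{\tau_n \ge T\}$ one has $\e^{-\lambda(T\wedge\tau_n)}V(T\wedge\tau_n,X_{T\wedge\tau_n}) = \e^{-\lambda T}V(T,X_T) \le \e^{-\lambda T}\sup_x V(T,x)$, and using $P(\tau_n \ge T) \le 1$ this term is bounded above by $\e^{-\lambda T}\sup_x V(T,x)$. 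On $\{\tau_n < T\}$, bounding $V$ above by $\|V\|_\infty := \sup V$ (here is where boundedness of $V$ is essential) and $\e^{-\lambda(T\wedge\tau_n)} \le 1$, that term is at most $\|V\|_\infty\, P(\tau_n < T)$. Hence
\begin{equation*}
V(0,x_0) \le \e^{-\lambda T}\sup_{x\in\mathbb{R}^d}V(T,x) + \|V\|_\infty\, P(\tau_n < T).
\end{equation*}

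Rearranging gives $P(\tau_n < T) \ge \big(V(0,x_0) - \e^{-\lambda T}\sup_x V(T,x)\big)/\|V\|_\infty$, and the numerator is strictly positive by hypothesis~\eqref{expl}, independently of $n$. Therefore $\lim_{n\to\infty} P(\tau_n \le T) \ge \liminf_n P(\tau_n < T) > 0$, which is the claim (I would note that $\tilde P$ in the statement is just $P$, or the relevant measure under which $X$ solves~\eqref{d1}; the same computation applies verbatim). The only real subtlety — the "main obstacle" — is making sure the stochastic-integral term is a true martingale rather than a local martingale when taking expectations; this is handled precisely by the localization at $\tau_n$ together with the boundedness of $V$ and local boundedness of $\sigma$, so that $\nabla V(t,X_t)\cdot\sigma(X_t)$ is bounded on $[0,T\wedge\tau_n]$. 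A secondary point worth a sentence is that one should first derive the inequality for $Y_{t\wedge\tau_n}$ at a deterministic time and then optionally stop, rather than trying to apply optional stopping to an a~priori-only-local martingale; and one could also remark that letting $n\to\infty$ one in fact obtains $P(T_\infty \le T) > 0$, i.e.\ genuine explosion with positive probability, which is the form used later via Corollary~\ref{col}.
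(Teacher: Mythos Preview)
Your argument is correct and is essentially the paper's own proof, just written out in more detail: both stop at $\tau_n$, use $\mathcal{L}V\ge\lambda V$ to make $\e^{-\lambda t}V(t,X_t)$ a submartingale up to $\tau_n$, and split $E\bigl[\e^{-\lambda(T\wedge\tau_n)}V(T\wedge\tau_n,X_{T\wedge\tau_n})\bigr]$ over $\{\tau_n\le T\}$ and $\{\tau_n>T\}$. The only difference is cosmetic---you rearrange to an explicit uniform lower bound on $P(\tau_n<T)$ whereas the paper argues by contradiction (assuming $P(\tau_n\le T)\to 0$ and obtaining $V(0,x_0)\le \e^{-\lambda T}\sup_x V(T,x)$)---and your discussion of why the stopped stochastic integral is a genuine martingale is in fact more careful than the paper's.
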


\begin{proof}
Define $\tau_n=\inf\{t: |X_t| \geq n\}.$ Now, we are supposing that $\mathcal{L}V \geq \lambda V$ for $t \in [0, \infty),$ we have:

 \begin{equation*} V(0,x_0)\leq  \e^{-\lambda T}(\sup_{x \in \mathbb{R}^{d}}V(T,x))P(\tau_n \textgreater T) + (\sup_{t \in [0,T]} \sup_{x \in \mathbb{R}^{d}}V(t,x))P(\tau_n \leq T) \end{equation*}

 If it were true that $\lim_{n \to \infty}P(\tau_n \leq T)$ were zero then we would arrive at \begin{equation*} V(0,x_0) \leq \e^{-\lambda T} \sup_{x \in \mathbb{R}^{2}}V(T,x) \end{equation*} which is a contradiction because we assumed that the function $V$ satisfied~\eqref{expl}. We are done, and we have established that we must have $P(\tau_\infty \leq T) \textgreater 0.$

\end{proof}

\begin{remark}\label{rem} Note that the conditions of this theorem are sufficient to ensure that \textit{each and every} component of the vector $X$ explodes. This can be seen by replacing the stopping times $\tau^{n}$ in the proof of the theorem by the stopping times $\tau^{i,n}=\inf\{t: |X^i_t| \geq n\}.$

\end{remark}

 We state two more theorems that involve conditions on the drift and diffusion coefficient of the vector diffusion $X$ such that explosion or non-explosion occurs:
 
 \bigskip
 
  \begin{theorem}~\label{var3}[Theorem 10.2.3]

Assume the existence of some $r$ and continuous functions $A: [r, \infty) \rightarrow (0,\infty)$ and $B:[r, \infty) \rightarrow (0,\infty)$ such that for $\rho \geq (2r)^{\frac{1}{2}},$  and $|x|= \rho,$

 \begin{eqnarray}\label{1023} A(\frac{\rho^{2}}{2})& \geq &\langle x, a(x)x \rangle \\&&   \langle x,a(x)x \rangle B(\frac{\rho^{2}}{2}) \geq (Trace(a(x)+2\langle x,b(x)\rangle \end{eqnarray} 
 
 and, with $C(\rho)=\e^{\int_{\frac{1}{2}}^{\rho}B(\sigma)d\sigma}:$ 
 
 $\int_{r}^{\infty}[C(\rho)]^{-1}d\rho \int_{\frac{1}{2}}^{\rho}\frac{C(\sigma)}{A(\sigma)}d\sigma= \infty.$

 Then, the process $X$ does not explode before time $T$.

 \end{theorem}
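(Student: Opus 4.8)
The plan is to reduce the $d$-dimensional non-explosion question to a one-dimensional comparison via the radial process $\rho_t := |X_t|$, and then apply a Feller-type test to the ODE/SDE governing $|X_t|^2$. First I would apply Itô's formula to $Y_t := \tfrac12|X_t|^2$: one computes $dY_t = \langle X_t, \sigma(X_t)\,dB_t\rangle + \tfrac12\bigl(\mathrm{Trace}(a(X_t)) + 2\langle X_t, b(X_t)\rangle\bigr)\,dt$, so that the martingale part has quadratic variation $\langle X_t, a(X_t)X_t\rangle\,dt$. The hypotheses~\eqref{1023} say precisely that, on the sphere $|x|=\rho$, the diffusion coefficient of $Y$ is bounded above by $A(\rho^2/2)=A(Y)$ and the drift of $Y$ is bounded above by $\langle x, a(x)x\rangle B(Y) \le A(Y)B(Y)$ (using the first inequality). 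So $Y$ is dominated, in the sense of comparison, by a one-dimensional diffusion $\widetilde Y$ with generator $A(y)\,\tfrac{d^2}{dy^2} + A(y)B(y)\,\tfrac{d}{dy}$ (after the usual care that the bounds only hold for $\rho$ large, i.e.\ outside a compact set, which is harmless for explosion questions since the process cannot explode while staying in a compact set).

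Next I would recognize the displayed integral condition as Feller's test for non-explosion for this one-dimensional diffusion $\widetilde Y$. Indeed, the scale function of $\tfrac{d^2}{dy^2} + B\,\tfrac{d}{dy}$ has derivative $s'(\rho) = \exp(-\int^\rho B) = C(\rho)^{-1}$ up to a constant, and the speed measure density is proportional to $C(\sigma)/A(\sigma)$; Feller's criterion for non-explosion at $+\infty$ is exactly $\int_r^\infty s'(\rho)\bigl(\int_{1/2}^\rho \tfrac{C(\sigma)}{A(\sigma)}\,d\sigma\bigr)\,d\rho = \infty$, which is the stated hypothesis. Hence $\widetilde Y$ does not reach $+\infty$ in finite time, and by the comparison/domination $Y_t = \tfrac12|X_t|^2$ cannot reach $+\infty$ before $T$ either; equivalently $\tau_n = \inf\{t : |X_t|\ge n\} \to \infty$ a.s., so $X$ does not explode before $T$. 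Alternatively — and this is likely what Stroock--Varadhan do, given the $C^{1,2}$ Lyapunov-function framework of Theorems~\ref{var1} and~\ref{var2} — one builds an explicit Lyapunov function $V(x) = g(|x|)$ with $g(\rho) = \int_r^\rho C(u)^{-1}\bigl(\int_{1/2}^u \tfrac{C(\sigma)}{A(\sigma)}\,d\sigma\bigr)\,du$; the integral condition forces $g(\rho)\to\infty$ as $\rho\to\infty$, and a direct computation using~\eqref{1023} shows $\mathcal{L}V \le 0$ (or $\le \lambda V$ after absorbing a constant), so Theorem~\ref{var1} applies directly.

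The main obstacle, and the step that needs genuine care rather than routine computation, is handling the direction of the inequalities in~\eqref{1023} correctly when forming the dominating one-dimensional object: the drift bound involves $\langle x, a(x)x\rangle$, which itself varies over the sphere, so one must combine the two inequalities in the right order (use $\langle x,a(x)x\rangle \le A(\rho^2/2)$ to pass from $\langle x,a(x)x\rangle B$ to $AB$) and verify that $\mathcal{L}V$ evaluated via the chain rule $\mathcal{L}(g\circ|\cdot|)$ produces a term $g''(\rho)\langle x,a(x)x\rangle/\rho^2 \cdot(\dots) + g'(\rho)\cdot(\dots)$ whose sign is controlled regardless of where on the sphere $x$ sits — this is where the monotonicity built into the construction of $g$ (namely $g'>0$, $g''$ of a definite sign relative to the drift) must be checked. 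A secondary technical point is the localization: the bounds are only assumed for $\rho \ge (2r)^{1/2}$, so one argues on the event that $X$ has already exited the ball of radius $r$ and notes that explosion is unaffected by behavior in a compact set, patching $V$ smoothly inside $\{|x|\le r\}$ if the Lyapunov-function route is taken.
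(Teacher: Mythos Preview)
Your proposal is correct, and your ``alternatively'' route---constructing an explicit radial Lyapunov function and invoking Theorem~\ref{var1}---is exactly what the paper does: its entire treatment of Theorem~\ref{var3} is the one-line Remark that the hypotheses guarantee the existence of a $V$ satisfying the conditions of Theorem~\ref{var1}. Your explicit choice of $g$ solves $A(g'' + Bg') = 1$, which after the computation gives $\mathcal{L}V \le \tfrac12$ outside the ball of radius $(2r)^{1/2}$, and the integral hypothesis forces $V\to\infty$; one small slip is that the natural argument is $|x|^2/2$ rather than $|x|$, i.e.\ $V(x)=g(|x|^2/2)$, so that the chain-rule computation of $\mathcal{L}V$ produces exactly $\tfrac12\phi''\langle x,a(x)x\rangle + \tfrac12\phi'\bigl(\mathrm{Trace}\,a + 2\langle x,b\rangle\bigr)$ without the extra curvature terms that $g(|x|)$ would generate.

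Your primary route via comparison with a one-dimensional diffusion $\widetilde Y$ is a genuinely different argument and is morally right, but note a technical wrinkle you would need to address if you pursued it: the martingale part of $Y_t=\tfrac12|X_t|^2$ has quadratic-variation density $\langle X_t,a(X_t)X_t\rangle$, which is not a function of $Y_t$ alone, so the standard Ikeda--Watanabe comparison theorem (which requires the two processes to share the same diffusion coefficient) does not apply directly. One can rescue this either by a time-change argument or, more cleanly, by abandoning pathwise comparison and instead applying the Lyapunov function for $\widetilde Y$ directly to $Y$---which of course lands you back in the Stroock--Varadhan approach. So the Lyapunov route is not just the paper's choice but also the path of least resistance.
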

 \begin{remark} The conditions of this theorem ensure the existence of a function $V$ that satisfies the conditions of Theorem~\ref{var1}, which, as we know, guarantee non-explosion of the process $X.$
\end{remark}

 \begin{theorem}~\label{var4}[Theorem 10.2.4]
 
Assume that, for each $R \textgreater 0$ the following condition holds:\\ $\inf_{|\theta|=1}\inf_{|x|=R} \langle \theta, a(x)\theta \rangle \textgreater 0$ and $\sup_{|x| \leq R} |b(x)| \textless \infty.$ 
 
 Additionally, assume the existence of continuous functions $A: [\frac{1}{2}, \infty) \rightarrow (0,\infty)$ and $B:[\frac{1}{2}, \infty) \rightarrow (0,\infty)$ such that for $\rho \geq 1,$ and $|x|= \rho,$

 \begin{eqnarray}\label{1024} A(\frac{\rho^{2}}{2})& \leq &\langle x, a(x)x \rangle \\&&   \langle x,a(x)x \rangle B(\frac{\rho^{2}}{2}) \leq (Trace(a(x)+2\langle x,b(x)\rangle \end{eqnarray} 
 
 and, with $C(\rho)=\e^{\int_{\frac{1}{2}}^{\rho}B(\sigma)d\sigma}:$ 
 
 $\int_{\frac{1}{2}}^{\infty}[C(\rho)]^{-1}d\rho \int_{\frac{1}{2}}^{\rho}\frac{C(\sigma)}{A(\sigma)}d\sigma \textless \infty.$

Then, the process $X$ does explode before time $T.$
 
 \end{theorem}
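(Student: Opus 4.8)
The plan is to mirror the logic of Theorem~10.2.3 (our Theorem~\ref{var3}) but reverse all the inequalities, so that instead of building a Lyapunov function witnessing non-explosion, we build one witnessing explosion via Theorem~\ref{var2}. First I would reduce to a radial problem: set $\rho = |x|$ and look for $V$ of the form $V(t,x) = e^{\lambda t}\,g(|x|^2/2)$ for a suitable increasing function $g$ and constant $\lambda>0$ to be chosen. Applying $\mathcal{L}$ to such a $V$ and writing $u = \rho^2/2$, a direct computation gives
\[
\mathcal{L}V = e^{\lambda t}\Bigl[\lambda g(u) + g'(u)\tfrac{1}{2}\bigl(\mathrm{Trace}(a(x)) + 2\langle x,b(x)\rangle\bigr) + g''(u)\tfrac{1}{2}\langle x,a(x)x\rangle\Bigr].
\]
Using the hypotheses \eqref{1024} — namely $\langle x,a(x)x\rangle \ge A(u)$ and $\mathrm{Trace}(a(x)) + 2\langle x,b(x)\rangle \ge \langle x,a(x)x\rangle B(u)$ when $g'>0$ — one sees that it suffices to find $g$ solving (or sub-solving) the ODE $g''(u) + B(u)g'(u) = 0$ on $[\tfrac12,\infty)$; this is exactly the equation whose solution is $g'(u) = c\,[C(\sqrt{2u}\,)]^{-1}$ with $C$ the integrating factor defined in the statement, after the change of variables matching $\sigma \leftrightarrow$ the radial coordinate. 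Integrating, $g(u) = \int_{1/2}^{u}[C(\cdot)]^{-1}$, and the finiteness condition $\int_{1/2}^{\infty}[C(\rho)]^{-1}\,d\rho\int_{1/2}^{\rho}\frac{C(\sigma)}{A(\sigma)}\,d\sigma < \infty$ is precisely what guarantees that this $g$ is \emph{bounded} on $[\tfrac12,\infty)$ (the double integral being the natural expression for the total growth of $g$ once one accounts for the $A$-weighted second-order term).

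Next I would verify the two hypotheses of Theorem~\ref{var2} for this $V$. Boundedness of $V$ on $[0,T]\times\mathbb{R}^d$ follows from boundedness of $g$ together with the factor $e^{\lambda t}$ on the compact interval $[0,T]$. For the strict inequality \eqref{expl}, $V(0,x_0) > e^{-\lambda T}\sup_x V(T,x)$, note that $\sup_x V(T,x) = e^{\lambda T}\sup_u g(u) = e^{\lambda T} g(\infty)$, so the inequality reads $g(|x_0|^2/2) > g(\infty)$, which is false since $g$ is increasing. This is the point where one must be more careful than in the non-explosion case: one cannot simply take $V$ radial and increasing. The standard fix (as in Stroock--Varadhan) is to subtract off a bounded correction or, more simply, to normalize $g$ so that it is \emph{decreasing} and bounded — i.e. replace $g$ by $\bar g(u) = g(\infty) - g(u) + \varepsilon$ for small $\varepsilon>0$, which is still bounded, still $C^2$, now has $\bar g' < 0$, and satisfies $\bar g(|x_0|^2/2) > \bar g(\infty) = \varepsilon$. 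With $\bar g' < 0$ the inequality direction in the use of \eqref{1024} flips appropriately, and $\mathcal{L}\bar V \ge \lambda \bar V$ still holds for the right choice of $\lambda$; this sign bookkeeping is the step I expect to be the main obstacle, since it is where the hypotheses \eqref{1024} and the monotonicity of the test function have to be threaded together consistently. The uniform ellipticity assumption $\inf_{|\theta|=1}\inf_{|x|=R}\langle\theta,a(x)\theta\rangle > 0$ and local boundedness of $b$ are used to ensure the generator $\mathcal{L}$ is well-behaved (non-degenerate) on each annulus, so that the localizing stopping times $\tau_n$ are the natural exit times of balls and the optional-stopping argument inside Theorem~\ref{var2} is valid.

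Finally, having produced $\bar V$ bounded, $C^{1,2}$, satisfying $\mathcal{L}\bar V \ge \lambda \bar V$ and the strict initial inequality, Theorem~\ref{var2} yields $\lim_{n\to\infty}\tilde P(\tau_n \le T) > 0$, i.e. $P(\tau_\infty \le T) > 0$, which is the assertion that $X$ explodes before time $T$ with positive probability. I would close by remarking (as in the companion remark to Theorem~\ref{var3}) that one should really compare this construction directly to the proof of \cite[Theorem 10.2.4]{} in Stroock--Varadhan, since the content here is the translation of their radial ODE comparison into the $[0,T]$-with-$e^{\lambda t}$ framework already set up for Theorems~\ref{var1} and~\ref{var2}; the only genuinely new bookkeeping is matching the change of variables $u = \rho^2/2$ against the integrals defining $A$, $B$, and $C$.
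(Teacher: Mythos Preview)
The paper does not actually prove Theorem~\ref{var4}; it merely states the result (which is Theorem~10.2.4 in Stroock--Varadhan) and follows it with a remark recording the conclusion $\lim_{T\to\infty}\lim_{n\to\infty}P(\tau_n\le T)=1$. So there is no ``paper's own proof'' to compare against beyond the Stroock--Varadhan argument itself.

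That said, your plan has a genuine gap. The correct radial test function is not the solution of $g''+Bg'=0$; it is
\[
\phi(u)=\int_{1/2}^{u}C(\sigma)^{-1}\Bigl(\int_{1/2}^{\sigma}\frac{C(\tau)}{A(\tau)}\,d\tau\Bigr)\,d\sigma,
\]
which solves $\phi''+B\phi'=1/A$. The finiteness hypothesis in the statement is exactly $\phi(\infty)<\infty$; it does \emph{not} say that your $g(u)=\int_{1/2}^{u}C^{-1}$ is bounded, and in general it need not be. With this $\phi$ (which is increasing, $\phi'>0$), substituting $\phi''=-B\phi'+1/A$ into the radial expression for the generator and using both inequalities in~\eqref{1024} gives
\[
\mathcal{L}_0\,\phi\bigl(|x|^2/2\bigr)
=\tfrac{1}{2}\langle x,ax\rangle/A+\tfrac{1}{2}\phi'\bigl[(\mathrm{Tr}\,a+2\langle x,b\rangle)-B\langle x,ax\rangle\bigr]
\ \ge\ \tfrac{1}{2}.
\]
This is the key inequality: $\phi(|X_t|^2/2)-t/2$ is a local submartingale, so optional stopping at $\tau_n\wedge t$ yields $E[\tau_n\wedge t]\le 2\phi(\infty)$ uniformly in $n,t$, hence $E[\tau_\infty]<\infty$ and explosion occurs a.s.\ in finite time. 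That is the Stroock--Varadhan argument, and it does \emph{not} route through Theorem~\ref{var2}.

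Your attempt to force the argument through Theorem~\ref{var2} breaks precisely where you flagged it. With an increasing $g$ (or $\phi$) the strict inequality $V(0,x_0)>e^{-\lambda T}\sup_xV(T,x)$ reduces to $\phi(|x_0|^2/2)>\phi(\infty)$, which is false. Your proposed fix $\bar g=g(\infty)-g+\varepsilon$ has $\bar g'<0$, so multiplying the hypothesis $\mathrm{Tr}\,a+2\langle x,b\rangle\ge B\langle x,ax\rangle$ by $\bar g'$ reverses the inequality and yields only $\mathcal{L}_0\bar g\le\tfrac12\langle x,ax\rangle(\bar g''+B\bar g')$, an \emph{upper} bound; you cannot conclude $\mathcal{L}\bar V\ge\lambda\bar V$. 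The ``sign bookkeeping'' you anticipated as the main obstacle is in fact fatal for this route. The moral is that the explosion half of the Khasminskii test is proved by a bounded submartingale-type estimate on $E[\tau_\infty]$, not by the supersolution/subsolution dichotomy of Theorem~\ref{var2}.
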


\begin{remark}

The conditions of this theorem ensure that the following condition holds: 

\begin{equation} \lim_{T \to \infty} \lim_{n \to \infty} P^{n}[\tau^{n} \leq T]=1 \end{equation} Thus, there is a finite time before which the process $X$ explodes, with probability 1. 

 \end{remark}

\begin{proposition}
We can replace the assumption in Theorem~\ref{var3} that $\rho \geq (2r)^{\frac{1}{2}}$ and that $A: [r, \infty) \rightarrow (0,\infty)$ and $B:[r, \infty) \rightarrow (0,\infty)$ by the assumption that $\inf_{1 \leq i \leq d}\{x_i\} \geq 1$ and $A: [\frac{3}{2}, \infty) \rightarrow (0,\infty)$ and $B:[\frac{3}{2}, \infty) \rightarrow (0,\infty)$ and the same result holds.

We can also replace the assumption in Theorem~\ref{var4} that $\rho \geq 1$ and that $A: [\frac{1}{2}, \infty) \rightarrow (0,\infty)$ and $B:[\frac{1}{2}, \infty) \rightarrow (0,\infty)$ by the assumption that $\inf_{1 \leq i \leq d}\{x_i\} \geq 1$ and $A: [\frac{3}{2}, \infty) \rightarrow (0,\infty)$ and $B:[\frac{3}{2}, \infty) \rightarrow (0,\infty)$ and the same result holds.

We omit the proof of this proposition.

\end{proposition}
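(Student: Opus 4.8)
The plan is to show that the substitutions described do not affect the conclusions of Theorems~\ref{var3} and~\ref{var4}, since the role of the lower bound on $\rho$ (respectively on the $x_i$) is purely to guarantee that the radial comparison functions $A$ and $B$, and the resulting Lyapunov function $V$, are defined and have the right sign on the region where the diffusion actually lives before explosion. First I would recall the construction underlying Theorem 10.2.3 of Stroock--Varadhan: from $A$ and $B$ one builds a radial function $u(\rho)=\int_r^\rho [C(\sigma)]^{-1}\,d\sigma\int_{1/2}^\sigma \tfrac{C(t)}{A(t)}\,dt$ and sets $V(t,x)=e^{\lambda t}u(|x|)$ (or a variant), and the inequalities~\eqref{1023} are exactly what force $\mathcal L V-\lambda V\le 0$ on $\{|x|\ge (2r)^{1/2}\}$; inside the ball one patches $V$ to a bounded smooth function. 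The point is that $(2r)^{1/2}$ is an arbitrary fixed radius, so nothing is lost by taking $r=3/4$, which gives the threshold radius $(2r)^{1/2}=\sqrt{3/2}$.

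Next I would observe that restricting attention to the orthant $\{\inf_{1\le i\le d} x_i\ge 1\}$ is a \emph{stronger} hypothesis on the region of interest than $|x|\ge \rho_0$ for a suitable $\rho_0$: if $x_i\ge 1$ for every $i$, then $|x|^2=\sum_i x_i^2\ge d\ge 1$, hence $|x|\ge 1$, and in fact $\rho^2/2=|x|^2/2\ge 1/2 \cdot d$; more to the point, on this orthant $|x|\ge\sqrt d\ge 1$, so as long as $A$ and $B$ are defined and positive on $[\tfrac32,\infty)$ and the comparison inequalities~\eqref{1023}–\eqref{1024} are assumed to hold for $|x|=\rho$ with the point $x$ in the orthant, the same Lyapunov argument goes through verbatim: one builds $V$ from $A,B$ on the radial range $[\tfrac32,\infty)$, patches it on a neighbourhood of the orthant's "corner," and reads off $\mathcal L V-\lambda V\le 0$ (for non-explosion, Theorem~\ref{var3}) or $\mathcal L V\ge\lambda V$ together with the boundary inequality~\eqref{expl} (for explosion, Theorem~\ref{var4}). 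In the explosion case one additionally uses Remark~\ref{rem}, so the conclusion $\lim_n P(\tau_n\le T)>0$ — and indeed componentwise explosion — is retained.

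The key steps, in order, are: (i) recall the precise Lyapunov function produced in the proofs of Stroock--Varadhan Theorems 10.2.3 and 10.2.4 and isolate where the lower radial cutoff enters; (ii) verify that $\{\inf_i x_i\ge 1\}\subseteq\{|x|\ge\sqrt d\}\subseteq\{|x|\ge 1\}$ and, more importantly, that on this set $\rho=|x|\ge\sqrt d$ so that the comparison functions need only be defined on $[\tfrac32,\infty)$ (here $\tfrac32$ is chosen merely as a convenient value $\ge 1$ that also accommodates the shifted integration basepoint); (iii) re-run the two comparison-function estimates on this smaller region, checking that the integrals $\int^\infty [C(\rho)]^{-1}d\rho\int_{1/2}^\rho \tfrac{C(\sigma)}{A(\sigma)}d\sigma$ diverge (resp.\ converge) exactly as before since the integrand's behaviour at $\infty$ is unchanged; (iv) patch $V$ smoothly on the bounded complementary region and invoke Theorems~\ref{var1}, \ref{var2} and Remark~\ref{rem} to conclude.

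The main obstacle I anticipate is purely bookkeeping rather than conceptual: one must make sure the smooth patching of the radial Lyapunov function across the boundary of the orthant $\{\inf_i x_i\ge 1\}$ still yields a function in $C^{1,2}([0,T]\times\mathbb R^d)$ satisfying the required differential inequality globally — the orthant's boundary is only piecewise smooth (it has corners), so the patching region must be chosen as a slightly fattened neighbourhood on which $V$ is taken constant-in-$x$ (or a mollified version of $u(|x|)$), and one checks that $\mathcal L$ applied to a locally bounded $C^{1,2}$ function still obeys $\mathcal L V-\lambda V\le 0$ there by choosing $\lambda$ large, exactly as in the original proof. Since the paper explicitly says "We omit the proof of this proposition," I would present the above as a sketch and leave these verifications to the reader.
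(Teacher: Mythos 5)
Since the paper explicitly omits the proof, there is no paper argument to compare against; I am evaluating your sketch on its own terms.

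Your central observation --- that $\inf_{1\le i\le d}x_i\ge1$ forces $|x|\ge\sqrt d$, so the radial lower cutoffs in the Stroock--Varadhan theorems can be replaced by an orthant constraint --- is the right starting point. But there is a genuine gap in the patching step. In Theorems 10.2.3 and 10.2.4 of Stroock--Varadhan, the set on which the comparison inequalities are imposed, $\{|x|\ge(2r)^{1/2}\}$, has a \emph{bounded} complement (a ball), so the radial Lyapunov function can be smoothly modified there at finite cost, absorbed by choosing $\lambda$ large. The shifted orthant $\{\inf_{1\le i\le d}x_i\ge1\}$ has an \emph{unbounded} complement: for example $(\tfrac12,n,1,\dots,1)$ lies off the orthant for every $n$, with $|x|$ arbitrarily large. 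Your proposal to patch ``on a slightly fattened neighbourhood of the orthant's corner'' only handles the bounded part of that complement. At points off the orthant with $|x|$ large you have no control on $\mathcal{L}V-\lambda V$, and the diffusion driven by \eqref{d}--\eqref{v} is not in general confined to the orthant: the $M^j$ are nonnegative local martingales started at $1$ and can dip below $1$. To close the gap one would need an additional ingredient --- confinement of the process to the orthant, or comparison inequalities holding on a full radial shell, or some a priori bound excluding the ``large-norm-off-orthant'' region --- none of which your sketch supplies or the proposition states. The numerical point you raise is also real but secondary: $\inf_i x_i\ge1$ gives only $\rho^2/2\ge d/2$, which reaches the domain $[\tfrac32,\infty)$ only when $d\ge3$, so the proposition is implicitly dimension-dependent and your sketch should say so explicitly rather than gliding past it.
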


We are now ready to state our main theorem:

\begin{theorem}\label{main}

Let $x=\begin{bmatrix} x_1 \\x_2\\.\\.\\.\\x_{d+1}\end{bmatrix}.$
$x^{'}=\begin{bmatrix} x_1 \\x_2\\.\\.\\.\\x_{d}\end{bmatrix}.$
Let also $a=\sigma {\sigma}^{T}.$
Define $\beta^{j}_{i}(M_t,v_t)=(\Sigma)_{i,j}\sigma_{i,i}\sigma_{j,j}M^{j}.$
Let the vector diffusion process $X=\begin{bmatrix} M \\v \end{bmatrix}$ solve ~\eqref{d}. 

Assume that for $i=1:k$ there exist functions $A^{i}$ and $B^{i}$ satisfying the conditions in~\eqref{1023} such that

\begin{eqnarray}\label{1}
A^{i}(\frac{\rho^{2}}{2})& \geq &\langle x, a(x)x \rangle \\  \langle x,a(x)x \rangle B^{i}(\frac{\rho^{2}}{2})& \geq& (Trace(a(x)+2\langle x,b(x)\rangle + 2\langle x^{'},\beta^{i}(x)\rangle\\ \nonumber
&&+2\langle x_{d+1}, (\Sigma)_{i,d+1}\sigma_{i,i}(x)\bar{\sigma(x)}\rangle \end{eqnarray}

Assume also that for $i=k+1:d$ there exist functions $A^{i}$ and $B^{i}$ satisfying the conditions in~\eqref{1024} such that 

\begin{eqnarray}\label{2}
A^{i}(\frac{\rho^{2}}{2})& \leq &\langle x, a(x)x \rangle \\  
\langle x,a(x)x \rangle B^{i}(\frac{\rho^{2}}{2}) &\leq& (Trace(a(x)+2\langle x,b(x)\rangle + 2\langle x^{'},\beta^{i}(x)\rangle\\ \nonumber
&&+2\langle x_{d+1}, (\Sigma)_{i,d+1}\sigma_{i,i}(x)\bar{\sigma(x)}\end{eqnarray}

Then, components $1:k$ of $M$ are true martingales and components $k+1:d$ are strict local martingales.

\end{theorem}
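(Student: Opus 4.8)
The plan is to reduce Theorem~\ref{main} to Corollary~\ref{col} by applying the explosion/non-explosion criteria of Section~\ref{multi} to a suitably enlarged diffusion, one for each index $j$. By Corollary~\ref{col}, it suffices to show that for $j=1:k$ we have $P^{j}(T^{j}_\infty > T)=1$, and for $j=k+1:d$ we have $P^{j}(T^{j}_\infty \le T)>0$. The key observation is that under the measure $P^{j}$, the process $M^{j}$ acquires a drift via a Girsanov-type change of measure: since $dP^{j}_i = (M^{j})^{R^{j}_i}_T\,dP$ and $M^{j}=\mathcal{E}(N^{j})$ with $N^{j}=\int_0^{\cdot}\sigma_{j,j}(M_s,v_s)\,dB^{j}_s$, Girsanov's theorem tells us that under $P^{j}$ (up to the explosion time), each $M^{i}$ picks up the drift $d\langle M^{i},N^{j}\rangle_t = \beta^{j}_i(M_t,v_t)\,dt$, and $v$ picks up the drift $(\Sigma)_{j,d+1}\sigma_{j,j}(M_t,v_t)\bar\sigma(M_t,v_t)\,dt$ in addition to its original drift $b$. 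So first I would write down carefully the SDE system satisfied by $X=\begin{bmatrix}M\\v\end{bmatrix}$ under $P^{j}$ on the stochastic interval $[0,T^{j}_\infty)$, identifying the new drift vector — call it $\tilde b^{j}$ — whose components are exactly the extra terms appearing on the right-hand sides of~\eqref{1} and~\eqref{2}.

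Next I would compute $\langle x, a(x)x\rangle$, $\mathrm{Trace}(a(x))$, and $2\langle x,\tilde b^{j}(x)\rangle$ for this $P^{j}$-diffusion and observe that the quantity $\mathrm{Trace}(a(x)) + 2\langle x,\tilde b^{j}(x)\rangle$ equals precisely $\mathrm{Trace}(a(x)) + 2\langle x,b(x)\rangle + 2\langle x',\beta^{j}(x)\rangle + 2\langle x_{d+1},(\Sigma)_{j,d+1}\sigma_{j,j}(x)\bar\sigma(x)\rangle$, which is the right-hand side of the second inequality in~\eqref{1} (resp.~\eqref{2}). Hence the hypotheses~\eqref{1} say exactly that the $P^{j}$-diffusion satisfies the conditions~\eqref{1023} of Theorem~\ref{var3} (together with the reparametrization from the Proposition, using $\inf_i x_i \ge 1$, which is legitimate because on $\Omega^{j}$ the coordinates are bounded below), so for $j=1:k$ the $P^{j}$-diffusion does not explode before $T$; in particular its $M^{j}$-component does not reach $\infty$, i.e.\ $P^{j}(T^{j}_\infty>T)=1$. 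Symmetrically, for $j=k+1:d$ the hypotheses~\eqref{2} put us in the setting of~\eqref{1024} and Theorem~\ref{var4} (again via the Proposition), yielding explosion before $T$ with positive probability, so $P^{j}(T^{j}_\infty\le T)>0$ — here one also invokes Remark~\ref{rem}, that the explosion theorems force \emph{every} component, in particular the $M^{j}$-component, to blow up. Applying Corollary~\ref{col} then finishes the proof.

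The main obstacle I anticipate is justifying the Girsanov computation rigorously \emph{up to and past the explosion time}: the measures $P^{j}$ live on $\mathcal{F}_{T^{j}_\infty}$ and are only locally absolutely continuous with respect to $P$, so one must work on the stopped intervals $[0,R^{j}_i]$, where $(M^{j})^{R^{j}_i}$ is a genuine martingale and the change of measure is classical, identify the drift of $X^{R^{j}_i}$ under $P^{j}_i=P^{j}|_{\mathcal{F}_{R^{j}_i}}$, and then pass to the limit $i\to\infty$ using the consistency $P^{j}|_{\mathcal{F}_{R^{j}_i}}=P^{j}_i$ established in Theorem~\ref{ext}. A secondary technical point is matching the state-space conventions: Theorems~\ref{var3}–\ref{var4} are stated for diffusions on $\mathbb{R}^d$, whereas our $X$ lives on $(0,\infty)^{k}\times\cdots$ with absorbing boundaries at $0$ and $\infty$; the cited Proposition is what bridges this gap, and I would note that hitting $0$ is irrelevant for the question of whether $M^{j}$ hits $\infty$. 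Modulo these bookkeeping issues, the proof is essentially a dictionary translation: hypotheses~\eqref{1}/\eqref{2} were reverse-engineered to be precisely the Stroock–Varadhan conditions for the $P^{j}$-diffusion, so once the drift under $P^{j}$ is correctly identified the rest is immediate.
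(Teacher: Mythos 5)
Your proposal is correct and follows essentially the same route as the paper: use Girsanov to write the $P^{j}$-dynamics of $X$ (picking up the drift $\beta^{j}$ and the extra drift on $v$), recognize the hypotheses~\eqref{1} and~\eqref{2} as exactly the Stroock--Varadhan conditions~\eqref{1023} and~\eqref{1024} for this drifted diffusion, invoke Theorems~\ref{var3} and~\ref{var4} plus Remark~\ref{rem}, and conclude via Corollary~\ref{col}. Your added remarks on the state-space reparametrization via the Proposition and on stopping at $R^{j}_i$ to make the Girsanov step rigorous are sound technical elaborations of points the paper passes over quickly.
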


\begin{proof}

 An application of Girsanov's theorem gives us that, under the measure $P^{j},$ the diffusion $X$ solves, up to an explosion time,

\begin{eqnarray} dM_t&=&M_t\sigma(M_t,v_t)dB_t+\beta^{j}(M_t,v_t); \quad M_0=1\label{d'}\\
 dv_t&=&\bar{\sigma}(M_t,v_t)dZ_t +b(M_t,v_t)dt+ (\Sigma)_{j,d+1}\sigma_{j,j}(M_t,v_t)\bar{\sigma}(M_t,v_t); \quad v_0=1\label{v'} \end{eqnarray}

Where the $i^{th}$ component of $\beta^{j}$ is given by $\beta^{j}_{i}=(\Sigma)_{i,j}\sigma_{i,i}\sigma_{j,j}M^{j}.$

Recall from Theorem~\ref{ext} that $M^{j}$ is a martingale if and only if $P(T^{j}_\infty \textgreater T)=1.$ That is, if and only if $M^{j}$ does not explode before time $T$ under the measure $P^{j}.$ Recall also from Corollary~\ref{col} that for components $1:k$ to be martingales and components $k+1:d$ to be strict local martingales, we need components $1:k$ not to explode under measures $P^{1}:P^{k}$ and for components $k+1:d$ to explode under measures $P^{k+1}:P^{d}.$

\bigskip

From Theorem~\ref{var3}, we have that the conditions in~\eqref{1} ensure that the vector diffusion $X$ does not explode before time $T$ under measures $P^{i}$ for $i=1:k$. From Theorem~\ref{var4}, we have that the conditions in~\eqref{2} ensure that the vector diffusion $X$ does explode before time $T$ for measures $P^{i}$ for $i=k+1:d.$ 

\bigskip

As we know from the Remark~\ref{rem}, this implies that the $M^{i}$ do not explode under measures $P^{i}$ for $i=1:k$ and that the $M^{i}$ do explode for the measures $P^{i}$ for $i=k+1:d.$ Thus, the conditions in Corollary~\ref{col} are satisfied, and this means that components $1:j$ of $M$ are true martingales and components $j+1:d$ are strict local martingales.

\end{proof}

\section{Examples}\label{examp}

Let us consider some examples. We will focus on the two and three-dimensional cases. 

\begin{example}

Suppose that we have a filtered probability space $(\Omega, \mathcal{F}_T, (\mathcal{F}_t)_{t\in [0,T]}, P)$ and one local martingale $S$ and a stochastic volatility $v$ that solve:

 \begin{eqnarray} dS_t&=&S_tf(S_t, v_t)dB_t; \quad S_0=1\\
 dv_t&=&\sigma(S_t,v_t)dW_t +b(S_t,v_t)dt;  \quad v_0=1
 \end{eqnarray}   
 
 Assume that $[B,W]=\rho.$ Suppose we want $S$ to be a martingale. For that to happen, we need it not to explode under the measure $P^{1}.$ Denoting by $\xi$ the explosion time of the process $X=\begin{bmatrix} S\\v \end{bmatrix},$ let us display the SDE satisfied by $X$ under $P^{1}:$

\begin{eqnarray*} dS_t&=&S_tf(S_t, v_t)dB_t1_{t\textless \xi}+ S_tf^{2}(S_t,v_t)1_{t\textless \xi}dt; \quad S_0=1\\
  dv_t&=&\sigma(S_t,v_t)1_{t\textless \xi}dW_t +b(S_t,v_t)1_{t\textless \xi}dt+ \rho \sigma(S_t,v_t)f(S_t,v_t)1_{t\textless \xi}dt;\quad v_0=1
   \end{eqnarray*}   

 We take the functions $A(x)$ and $B(x)$ from Theorem~\ref{var3} to be $A(x)=x^{1+\epsilon},$ $\epsilon \textless 1.$ and $B(x)=\frac{1}{x}.$ And we restrict these two functions to the interval $ [\frac{3}{2}, \infty).$ If we choose

 \begin{eqnarray*} f(x_1,x_2)=\frac{1}{2{x_1}^{2}}(\frac{|x|}{2})^{1+\epsilon} \\ \sigma(x_1,x_2)=\frac{1}{2x_2}(\frac{|x|}{2})^{1+\epsilon} \\ b(x_1,x_2)=-(\frac{1}{2{x_2}^{2}}+\frac{1}{x_1x_2})(\frac{|x|^2}{2})^{1+\epsilon}\\ \rho=-1 \end{eqnarray*} 
 
 Then it can be checked that $X$ does not explode under the measure $P^{1}$ and that $S$ is a true martingale. 
 
 Suppose we want $S$ to be a strict local martingale. We then take the functions $A(x)$ and $B(x)$ from Theorem~\ref{var4} to be $A(x)=x^{2+\epsilon},$ $\epsilon \textless 1$ and $B(x)=\frac{1}{x}.$  And we restrict these two functions to the interval $ [\frac{3}{2}, \infty).$
 
 If we choose \begin{eqnarray*} f(x_1,x_2)=\frac{1}{2{x_1}^{2}}(\frac{|x|}{2})^{2+\epsilon} \\ \sigma(x_1,x_2)=\frac{1}{2x_2}(\frac{|x|}{2})^{2+\epsilon} \\ b(x_1,x_2)=\frac{1}{x_2}|x|^{2+\epsilon} \\ \rho=1 \end{eqnarray*}

 Then the process $X$ explodes under the measure $P^{1},$ rendering $S$ a strict local martingale.
 
 \end{example}

\begin{example}

Suppose that we have a filtered probability space $(\Omega, \mathcal{F}_T, (\mathcal{F}_t)_{t\in [0,T]}, P)$ and two local martingales $S$ and $N$ and a stochastic volatility $v$ that solve

 \begin{eqnarray*} dS_t&=&S_tf(S_t, N_t,v_t)dB_t; \quad S_0=1\\
\label{example_eqn'}
dN_t&=&N_tg(S_t,N_t,v_t)dZ_t; \quad N_0=1\\
dv_t&=&\sigma(S_t,N_t,v_t)dW_t +b(S_t,N_t,v_t)dt ; \quad v_0=1
 \end{eqnarray*}   

 Let us write: \begin{eqnarray*} \big[B,W\big]=\rho^{1} \\ \big[B,Z\big]=\rho^{2} \\ \big[W,Z\big]= \rho^{3} \end{eqnarray*}

Let us display the SDE satisfied by $\begin{bmatrix} S\\ N \\v \end{bmatrix}$ under $P^{1}:$

 \begin{eqnarray*} dS_t&=&S_tf(S_t, N_t,v_t)dB_t1_{t \textless \xi}+ S_tf^{2}(S_t,N_t, v_t)1_{t \textless \xi}; \quad S_0=1\\
\label{x_eqn'}
dN_t&=&N_tg(S_t,X_t,v_t)dZ_t1_{t \textless \xi}+ \rho^{2}N_tg(S_t,X_t,v_t)\mu(S_t,N_t,v_t)1_{t \textless \xi}; \quad X_0=1\\
  dv_t&=&\sigma(S_t,N_t,v_t)dW_t1_{t \textless \xi} +b(S_t,N_t,v_t)dt1_{t \textless \xi} +\rho^{1}f(S_t,N_t,v_t)\sigma(S_t,N_t,v_t)dt1_{t \textless \xi} ; \quad v_0=1
 \end{eqnarray*}   
In the above $\xi$ is the explosion time of the vector process $X=\begin{bmatrix} S\\N\\v \end{bmatrix}.$

Let us display the SDE satisfied by $\begin{bmatrix} S\\ X \\v \end{bmatrix}$ under $P^{2}:$ 

 \begin{eqnarray*} dS_t&=&S_tf(S_t, N_t,v_t)dB_t1_{t \textless \xi}+ \rho^{2}S_tg(S_t,N_t,v_t)f(S_t,N_t, v_t)1_{t \textless \xi}; \quad S_0=1\\
\label{x_eqn''}
dX_t&=&X_tg(S_t,N_t,v_t)dZ_t1_{t \textless \xi}+ N_tg^{2}(S_t,N_t,v_t)1_{t \textless \xi}; \quad X_0=1\\
  dv_t&=&\sigma(S_t,X_t,v_t)dW_t1_{t \textless \xi} +b(S_t,X_t,v_t)dt1_{t \textless \xi}+ \rho^{3}g(S_t,X_t,v_t)\sigma(v_t)1_{t \textless \xi}dt;  \quad v_0=1
 \end{eqnarray*}

 Suppose we would like to display conditions such that $S$ is a martingale and $N$ is a strict local martingale. For this, we need $S$ to not explode under the measure $P^{1}$ and $N$ to explode under the measure $P^{2}.$

 We take the functions $A(x)$ and $B(x)$ from Theorem~\ref{var3} to be $A(x)=x^{1+\epsilon},$ $\epsilon \textless 1.$ and $B(x)=\frac{1}{x}.$ 
 
 We then take the functions $A(x)$ and $B(x)$ from Theorem~\ref{var4} to be $A(x)=x^{2+\epsilon},$ $\epsilon \textless 1$ and $B(x)=\frac{1}{x}.$  And we restrict all of these four functions to the interval $ [\frac{3}{2}, \infty).$
 
 Then it can be checked that if we choose $\rho^{1}=-1,$ $\rho^{2}=1,$ $\rho^{3}=1$ and
 \begin{eqnarray*} 
 f(x_1,x_2,x_3)=\frac{1}{\sqrt{3}}\frac{1}{{x_1}^{2}}(\frac{|x|}{2})^{1+\epsilon}\\
 g(x_1,x_2,x_3)=\frac{1}{\sqrt{3}}\frac{1}{{x_2}^{2}}(\frac{|x|}{2})^{1+\epsilon} \\
 \sigma(x_1,x_2,x_3)=\frac{1}{\sqrt{3}}\frac{1}{x_3}(\frac{|x|}{2})^{1+\epsilon}\\
 b(x_1,x_2,x_3)=(\frac{|x|}{2})^{2\epsilon}-(\frac{|x|}{2})^{2+2\epsilon}(\frac{5}{{x_1}^2}+\frac{1}{3{x_2}^2}+\frac{1}{3{x_3}^2})\end{eqnarray*}
 
 The conditions of Theorem~\ref{main} are satisfied, rendering $S$ a true martingale and $N$ a strict local martingale.

 \begin{remark}
In displaying sufficient conditions such that a local martingale is either a true martingale or a strict local martingale, we have exhibited conditions such that the entire vector $X,$ whose components are the $d$ local martingales and the stochastic volatility $v,$ does not explode or explodes under an appropriate different probability measure, respectively. Let's briefly discuss the case where we have just the price process and the stochastic volatility; that is, the case in which $X=\begin{bmatrix} S \\v \end{bmatrix}$ for a local martingale $S.$

Assume that $S$ and $v$ solve the following SDE:

 \begin{eqnarray} dS_t&=&S_tf(v_t)dB_t; \quad S_0=1\\
 dv_t&=&\sigma(v_t)dW_t +b(v_t)dt;  \quad v_0=1
 \end{eqnarray}

 In the case of $S$ being a martingale, recall that we have imposed the non-explosion of the vector process $\begin{bmatrix} S \\v \end{bmatrix}.$ The works of Mijatovic and Urusov in~\cite{MU} and that of Cui et. al. in~\cite{CUI}(specifically Theorem 4.1 in~\cite{CUI} which generalizes Theorem 2.1 in~\cite{MU}) imply that we are in one of the following cases:

 \begin{enumerate}
 \item $v$ does not hit either zero or $\infty$ under either of the measures $P$ or $P^{1}.$
 \item $v$ does not hit $\infty$ but hits zero under the measure $P^{1},$ and hits zero under the measure $P$ and doesn't hit $\infty$ under the measure $P.$ 
 \item $v$ does not hit $\infty$ but hits zero under the measure $P^{1},$ and hits zero and $\infty$ under the measure $P.$ 
  \item $v$ does not hit $\infty$ or zero under $P^{1}$ and hits $\infty$ but does not hit zero under $P.$
 
\item $v$ does not hit $\infty$ under $P$ or $P^{1}$, and $v$ hits zero under $P$ but doesn't hit zero under $P^{1}.$
 \end{enumerate}

 In the case of $S$ being a strict local martingale, according to the same authors, we are in one of the following cases: 
  \begin{enumerate}
  \item $v$ hits $\infty$ under $P^{1},$ does not under $P,$ $v$ hits zero under $P$ and $P^{1}.$   
   \item $v$ hits $\infty$ under $P^{1},$ does not under $P,$ $v$ hits zero under $P$ but not $P^{1}.$
   \item $v$ hits $\infty$ under $P^{1},$ does not under $P,$ $v$ does not hit zero under either $P$ or $P^{1}.$
   \item $v$ hits $\infty$ under $P^{1},$ does not under $P,$ $v$ does not hit zero under $P$ but hits zero under $P^{1}.$
   \item $v$ hits $0$ under $P^{1}$ but not under $P$ and hits $\infty$ under $P^{1}$ but not under $P.$ 
   \item $v$ hits $0$ under $P^{1}$ but not under $P$ and hits $\infty$ under both $P$ and $P^{1}.$ 
 \end{enumerate}

In the above, the measure $P^{1}$ is defined as in Theorem~\ref{ext}.

 \end{remark}

 \end{example}

\end{document}